\documentclass[a4paper,11pt]{bxjsarticle}

\usepackage[japanese,english]{babel}

\usepackage{amsmath,amsthm,amssymb,amsfonts}
\usepackage{mathtools}
\usepackage{bm}

\usepackage{color}
\usepackage{colortbl}
\usepackage{lmodern}
\usepackage{textcomp}
\usepackage{type1cm}
\definecolor{light-gray}{rgb}{0.83, 0.83, 0.83}

\usepackage{ascmac}
\usepackage{fancybox}

\usepackage{booktabs}
\usepackage{multirow}
\usepackage{makecell}

\usepackage{graphics}
\usepackage{graphicx,xcolor}
\usepackage{here}
\usepackage{float}
\usepackage[width=.75\textwidth]{caption}
\usepackage{enumerate}

\usepackage{titlesec}
\titleformat*{\section}{\Large}
\titleformat*{\subsection}{\large}

\usepackage[hang,flushmargin]{footmisc}

\usepackage{hyperref}
\usepackage{etoolbox}
\patchcmd{\thebibliography}{\section*{\refname}}{}{}{}

\usepackage{apacite}
\AtBeginDocument{%
}

\newtheorem{assumption}{Assumption}[]
\newtheorem{proposition}{Proposition}[]

\renewcommand{\thetable}{ \arabic{table}}
\renewcommand{\thefigure}{ \arabic{figure}}
 
\DeclareMathOperator*{\argmin}{arg\,min}

\usepackage{lscape}
\usepackage{rotating}

\begin{document}
\selectlanguage{english} 

\title{MSE-Optimal Difference-in-Differences Estimator\thanks{
  I sincerely appreciate Chishio Furukawa for his generous guidance and advice. 
  I also thank Shosei Sakaguchi for insightful comments.
  All errors are my own.
  The replication package for this paper is available at \url{https://github.com/yamato5810/MSE-Optimal_Difference-in-Differences_Estimator}
}}
\author{Yamato Igarashi\thanks{
  Graduate School of Economics, The University of Tokyo. igarashi-yamato-by@g.ecc.u-tokyo.ac.jp.
}}
\date{\today}

\maketitle

\vspace{2.0cm}

\begin{abstract}
  This paper develops a difference-in-differences (DiD) estimation method that selects the optimal length of pre-trends by minimizing the mean squared error (MSE).
  Conventional DiD regression models, such as the two-way fixed effects model or the event study model, may suffer from accuracy and validity concerns. 
  If the sample size is small, the estimator may have a larger variance.
  Also, pre-tests often lack power to detect violations of the parallel trends assumption as \citeA{Roth:2022} highlights.
  By focusing on the bias and variance tradeoff, the proposed method derives the MSE-optimal estimator from the optimal length of pre-trends.
  Simulation results and an empirical application demonstrate the practical applicability of the proposed method. 
\end{abstract}

\vspace{1cm}

\textbf{\textit{Keywords:}} Difference-in-Differences, Pre-trends, Mean Squared Error

\newpage

\newpage

\section{Introduction}
Difference-in-differences (DiD) is a widely used method in economics for estimating causal effects.
One of the well-known studies that implemented DiD is \citeA{Card_and_Krueger:2022}, which studies the effect of the minimum wage increase in New Jersey.
Recently, there have been many theoretical advances in DiD methods enabling researchers to estimate treatment effect in a wide range of settings. 

In practice, DiD is usually implemented via two regression models: the two-way fixed effects (TWFE) model and the event study model.
Aside from the problem of staggered treatment timing, those models have several problems.
One of the problems is that precise estimation is difficult when the sample size is small.
Event study plots, which are commonly used to present event study estimates, tend to have wider confidence intervals particularly when the number of observations is small.
This can be viewed as a variance problem.
Another problem is that the pre-trends test is not always effective \cite{Roth:2022}.
Even though there may be a potential bias between the treatment and control groups, the pre-trends test cannot detect it.
This can be viewed as a bias problem. 

To address these problems, this paper proposes an MSE-optimal approach to difference-in-differences estimation.
The framework is the following.
When longer pre-trends are included, the estimator tends to have a smaller variance.
At the same time, the estimator may be biased if there is a potential bias between the treatment and control groups.
On the other hand, the estimator tends to have a larger variance when only a shorter pre-trends are included.
Also, the estimator is likely to be less biased.
Using the bias and variance tradeoff, this method derives the MSE-optimal estimator by choosing the optimal length of the pre-trends. 
I illustrate this method through simulations and an empirical application.
The results show that the proposed method performs well in terms of MSE relative to conventional methods.

Several studies examine pre-trends in the DiD setting.
\citeA{Roth:2022} shows that the conventional pre-trends tests often have low power and that depending on such tests can distort estimation and inference.
\citeA{Rambachan_and_Roth:2023} propose robust inference procedures that allow for possible violations of the parallel trends assumption.
My paper takes the possibility of such bias seriously and proposes a procedure that minimizes the MSE of the estimator by explicitly balancing the bias and variance.
\citeA{Egami_and_Yamauchi:2023} show how multiple pre-treatment periods can improve DID estimation and propose double DID, which can be more efficient and rely on weaker assumptions than the conventional DID.
My paper also focuses on the role of pre-treatment periods in DiD estimation, by viewing the length of the pre-trends as a choice variable in estimation.
\citeA{Ishimaru:2026} studies treatment effect estimation in panel data without relying on the parallel trends assumption and proposes alternative conditions for identifying the average treatment effect on the treated.
By contrast, my paper works within a DiD framework in which deviations from parallel trends are assumed to be limited, so that the resulting distortion can be interpreted as a bias and incorporated into the MSE criterion.

The bias-variance tradeoff is also fundamental in the selection of the optimal bandwidth in regression discontinuity design (RDD) \cite{Imbens_and_Kalyanaraman:2012, Calonico_and_Cattaneo_and_Titiunik:2014}. 
A wider bandwidth includes more observations in estimation, which generally lowers variance. 
At the same time, however, observations farther from the cutoff may be less informative about the local treatment effect at the threshold, so a wider bandwidth may induce greater bias. 
Conversely, a narrower bandwidth focuses on observations closer to the cutoff, thereby reducing potential bias because the identifying assumptions are more plausible.
However, fewer observations are available; therefore the estimator typically has higher variance. 
Thus, the bandwidth choice in RDD is naturally characterized as a bias-variance tradeoff. 
This paper applies a similar idea to the DiD setting: extending the pre-treatment period may improve precision, but it may also increase bias because the conventional pre-trends test may fail to detect the bias.

The rest of the paper is organized as follows: In the next section, 
I describe the canonical DiD models as baseline DiD models.
In Section $3$, I show the proposed model and explain the implementations in the static and dynamic treatment effect cases.
Then, I illustrate the model with simulations in Section $4$.
I also apply the model to the empirical example in Section $5$, and I conclude in Section $6$.

\section{Baseline}
This section describes the standard DiD framework as compared with the model this paper proposes. 
Section $2.1$ is about the basic $2 \times 2$ design and some key assumptions.
Section $2.2$ explains the two-way fixed effects model which is a commonly used regression model for static treatment effect.
Section $2.3$ illustrates the event study model which is a commonly used regression model for dynamic treatment effect.

\subsection{Canonical Difference-in-differences}
The simplest DiD setting is a $2 \times 2$ (2 groups (treatment group vs control group) and 2 periods (pre-treatment vs post-treatment)) design. 
There have been many advances related to DiD methods even in recent years (see, e.g. \citeA{Baker_etal:2022}, \citeA{Baker_etal:2026} and \citeA{Roth_etal:2023} for a review)
and those newer methods differ technically from one another, but they fundamentally depend on the same idea.
Therefore, I begin with the simplest $2 \times 2$ design 
and specify some important assumptions. 

The notation used in this paper is as follows.
$Y_{i,t}$ denotes an outcome of individual $i$ and time $t$.
$D_{i,t}$ indicates whether or not the individual $i$ receives treatment in period $t$.
$Y_{i,t}(D_{i,t} = 1) = Y_{i,t}(1)$ represents the potential outcome of $Y$ if the individual $i$ receives a treatment at period $t$, and
$Y_{i,t}(D_{i,t} = 0) = Y_{i,t}(0)$ represents the potential outcome of $Y$ if the individual $i$ does not receive a treatment at period $t$.
Lastly, let $t^*_i$ be the time just before the individual $i$ receives a treatment.
Hereafter, $t^*_i$ is set to $-1$ unless specified otherwise.

One common target parameter of DiD is the average treatment effect on the treated (ATT). 
In the $2 \times 2$ design, the ATT can be expressed as
\begin{align*}
  \tau \equiv \mathbb{E}[Y_{i,1}(1) - Y_{i,1}(0)|D_{i} = 1]
\end{align*}

To obtain this ATT, the following three assumptions - Assumption~\ref{assumption:1}: Parallel Trends, Assumption~\ref{assumption:2}: No Anticipation and Assumption~\ref{assumption:3}: No Spillover - are crucial.

\begin{assumption}\label{assumption:1}
  {\normalfont Parallel Trends}\\
  \begin{equation*}
    \mathbb{E}[Y_{i,t}(0) - Y_{i,t-1}(0)|D_i = 1] = \mathbb{E}[Y_{i,t}(0) - Y_{i,t-1}(0)|D_i = 0] 
  \end{equation*}
\end{assumption}
\vspace{0.5cm}

\noindent
Assumption~\ref{assumption:1} states the average changes in outcomes are the same for the treatment group and the control group if both groups do not receive the treatment. 

\begin{assumption}\label{assumption:2}
  {\normalfont No Anticipation}\\
  \begin{equation*}
    Y_{i,t} = Y_{i,t}(0) \ \textnormal{for all} \ t \leq t^{*} \ \textnormal{and} \ i \ \textnormal{with} \ D_i = 1
  \end{equation*}
\end{assumption}
\vspace{0.5cm}

\noindent
Assumption~\ref{assumption:2} is an assumption that 
implies the treatment does not affect any actions before the treatment timing.
This assumption can be interpreted as a situation that no one could know whether or not they would receive a treatment.

\begin{assumption}\label{assumption:3}
  {\normalfont No Spillover\footnote{Although this paper maintains the No Spillover assumption, recent work has developed DiD methods that explicitly account for spillover effects; see \citeA{Butts:2023} and \citeA{Fiorini_Lee_and_Pfeifer:2024}.}}\\
  \begin{equation*}
   \forall j \neq i, \ Y_{i,t} (D_i, D_j) =  Y_{i,t} (D_i, D_j') \ \textnormal{for all} \ D_j\ \textnormal{and} \ D_j' 
  \end{equation*}
  {\normalfont }
\end{assumption}

\vspace{0.5cm}
\noindent
Assumption~\ref{assumption:3} states an individual's outcome is not affected by the treatment status of others.  

Under these assumptions, ATT in the $2 \times 2$ setting can be transformed into the following equation.

\begin{align}\label{ATT}
  \tau \ =& \ \mathbb{E}[Y_{i,1}(1) - Y_{i,1}(0)|D_{i} = 1]	\nonumber \\ 
  =& \ \mathbb{E}[Y_{i,1}(1) - Y_{i,0}(0)|D_{i} = 1] -\mathbb{E}[Y_{i,1}(0) - Y_{i,0}(0)|D_{i} = 0] \\
  =& \ \mathbb{E}[Y_{i,1} - Y_{i,0}|D_{i} = 1] -\mathbb{E}[Y_{i,1} - Y_{i,0}|D_{i} = 0] \nonumber
\end{align}

The above discussion presents the $2 \times 2$ DiD framework.
I extend the $2 \times 2$ DiD framework to a more general setting with multiple individuals observed over multiple time periods, focusing in particular on the extension from two periods to many periods.
Throughout the analysis, I maintain Assumptions 2 and 3.

\subsection{Two-Way Fixed Effects}
In practice, researchers typically estimate the ATT in a DiD framework using regression-based methods. 
One of the most commonly used specifications is the two-way fixed effects (TWFE) model. 
The TWFE specification without covariates is given by:

\begin{equation}\label{TWFE}
  Y_{it} =  \mu_i + \theta_t + \beta D_{it} + \varepsilon_{it}
\end{equation}

\noindent
where $\mu_i$ is an individual fixed effect and $\theta_t$ is a time fixed effect.
Then, the coefficient $\beta$ represents the ATT.
This model is suitable when the treatment effect is ``static'' (e.g. Figure\ref{fig: canonical plots} A). 
If the treatment effect is constant across post-treatment periods, a time-invariant $\beta$ is appropriate.
Although the TWFE model had been used widely in empirical research, this model has a problem with staggered treatment timing \cite{Goodman-Bacon:2021,Callaway_and_Sant'Anna:2022,de_Chaisemartin_and_d'Haultfoeuille:2020,de_Chaisemartin_and_d'Haultfoeuille:2023}.
This paper assumes a treatment occurs at the same time for all individuals, but it might be possible to extend to staggered treatment timing.

\subsection{Event Study}
Another model to derive a DiD estimator is the event study model.
This model is suitable when the treatment effect is ``dynamic'' (e.g. Figure\ref{fig: canonical plots} B). 
The event study model without covariates can be written as:

\begin{equation}\label{simple event study}
  Y_{it} =  \mu_i + \theta_t + \sum_{l \neq -1} \beta_l 1\{t - t^*_i = l\} + \varepsilon_{it}
\end{equation}

\noindent
where $\mu_i$ and $\theta_t$ are the same as the TWFE model.
The difference is the third term, which makes it possible to estimate the dynamic treatment effect.
When the treatment effect is dynamic, $\beta_l$ must be time-variant to capture the change in the treatment effect.
This paper does not address this issue, but staggered treatment causes a problem in the event study model similarly to the TWFE model \cite{Sun_and_Abraham:2021}.

\section{Model}
This section describes the model this paper proposes.
I first clarify the idea of the proposed model in Section $3.1$.
Then, I illustrate two cases depending on the type of treatment effect (static and dynamic)
in Section $3.2$ and $3.3$.

\subsection{Model Overview}
The main idea of this paper is to construct the MSE-optimal DiD estimator by changing the length of pre-trends.
I first explain why it is appropriate to consider the MSE-minimizing approach in this context.
In general, estimators including DiD estimators have smaller standard errors when the sample size is large.
This means a DiD estimator would be more precise if the pre-trends are longer. 
At first glance, longer pre-trends may seem preferable to shorter ones.
However, longer pre-trends might bias the estimator unintentionally.
In DiD studies, it is common to check the parallel trends assumption by examining whether or not pre-treatment estimates are statistically close to zero using $95\%$ confidence intervals (CI),
but such a pre-trends test may fail to detect bias \cite{Roth:2022}.
Therefore, it is possible that the estimator is biased. 
In particular, if there are longer potentially nonparallel pre-trends, the bias would be larger.
Hence, longer pre-trends might have a smaller variance and a larger bias, and shorter pre-trends might have a larger variance and a smaller bias. 
It can be said that there is a bias and variance tradeoff.
Note that MSE is expressed as the following:

\begin{equation}\label{MSE}
  \text{MSE} = \text{Bias}^2 + \text{Variance}
\end{equation}

\noindent
I define the optimal choice as the one that minimizes the MSE.

\begin{equation} \label{Min MSE}
  \min \text{MSE} = \min \big(\text{Bias}^2 + \text{Variance}\big)
\end{equation}

\noindent
This is why it is possible to incorporate the MSE-minimizing approach into the DiD framework by changing the length of pre-trends.

To analyze the proposed procedure more precisely, I consider an oracle benchmark.
Let $\mathcal{L}=\{0,1,\ldots,L\}$ denote the set of candidate length of pre-trends.
Also, let $\ell \in \mathcal{L}$ denote the length of pre-trends and $\ell_{\text{post}} (\geq 1)$ denote the length of post-treatment periods, which is fixed in this setting.
In this benchmark, I consider the TWFE specification in (\ref{TWFE}), and the target parameter is $\beta$.
For each $\ell \in \mathcal{L}$, let $\hat{\beta}_{\ell}$ denote the DiD estimator constructed with pre-treatment length $\ell$.
I impose the following two assumptions.
\begin{assumption}\label{assumption:oracle1}
  Assume that deviations from parallel trends in the pre-treatment periods follow a linear form, $\gamma \ell$.
\end{assumption}
\begin{assumption}\label{assumption:oracle2}
  Assume that the variance of the candidate estimators is finite.
\end{assumption}
\noindent
Then, the following proposition holds.

\begin{proposition}\label{prop:oracle}
  Suppose that the deviations from parallel trends in the pre-treatment periods satisfy Assumption \ref{assumption:oracle1} and the variance of the candidate estimators satisfies Assumption \ref{assumption:oracle2}.
  Also, suppose that the candidate set $\mathcal{L}=\{0,1,\ldots,L\}$ is finite.
  Then, there exists an MSE-optimal DiD estimator $\hat{\beta}^*$.
\end{proposition}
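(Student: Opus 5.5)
The plan is to show that the MSE is a well-defined real number for every candidate length $\ell$. The existence of a minimizer then follows because the candidate set is finite. Concretely, for each $\ell \in \mathcal{L}$ I will define
\begin{equation*}
  \text{MSE}(\ell) \equiv \mathbb{E}\big[(\hat{\beta}_{\ell}-\beta)^2\big] = \big(\mathbb{E}[\hat{\beta}_{\ell}]-\beta\big)^2 + \operatorname{Var}(\hat{\beta}_{\ell}),
\end{equation*}
which is the decomposition (\ref{MSE}) applied to each candidate. Then I will set $\ell^* \in \argmin_{\ell \in \mathcal{L}} \text{MSE}(\ell)$ and $\hat{\beta}^* \equiv \hat{\beta}_{\ell^*}$.

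\textbf{Step 1 (finiteness of the bias).} Under Assumption \ref{assumption:oracle1}, the deviation from parallel trends in pre-treatment period $-k$ grows linearly, of the form $\gamma k$. The estimator $\hat{\beta}_\ell$ is a difference of group means of post-treatment outcomes minus a difference of group means over the $\ell$ pre-treatment periods, so its expectation differs from $\beta$ by a fixed linear combination of these deviations. For instance, with equal weights over the pre-periods the bias is proportional to $\gamma(\ell+1)/2$ for $\ell \geq 1$, up to the normalization relative to $t^*=-1$. The exact weights are not important. What matters is that, for each fixed $\ell \le L$, the bias is a finite linear expression in the finite constant $\gamma$, so $\text{Bias}(\ell)^2 < \infty$. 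The case $\ell = 0$, where only the period $t^*$ is used as baseline, will be handled separately: its bias is $0$ or $\gamma$ times a constant, and in either case it is finite.

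\textbf{Step 2 (finiteness of the variance).} Assumption \ref{assumption:oracle2} directly gives $\operatorname{Var}(\hat{\beta}_\ell) < \infty$ for every $\ell \in \mathcal{L}$. It also guarantees that $\mathbb{E}[\hat{\beta}_\ell]$ exists, which justifies the bias-variance decomposition above. Combining Steps 1 and 2, the map $\ell \mapsto \text{MSE}(\ell)$ is a function from $\mathcal{L}$ to $[0,\infty)$.

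\textbf{Step 3 (existence of the minimizer).} A real-valued function on the nonempty finite set $\mathcal{L}=\{0,1,\ldots,L\}$ attains its minimum. One way to see this is to compare the $L+1$ values pairwise, or by induction on $L$. Hence $\ell^*$ exists, possibly non-uniquely, in which case I will pick the smallest minimizer to fix a choice. Then $\hat{\beta}^* = \hat{\beta}_{\ell^*}$ satisfies $\text{MSE}(\ell^*) \le \text{MSE}(\ell)$ for all $\ell$, which is the claim.

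The main obstacle I expect is Step 1. The paper has not pinned down the exact form of $\hat{\beta}_\ell$ as a function of $\ell$, so the bias must be written explicitly enough to confirm that it is finite and well defined, including at the boundary $\ell=0$. I will first try representing $\hat{\beta}_\ell$ as the TWFE coefficient on the balanced sub-panel of periods $\{-\ell,\ldots,\ell_{\text{post}}\}$. By the Frisch--Waugh--Lovell theorem, this coefficient is the treated-minus-control difference between the average post-period outcome and the average pre-period outcome, which makes the bias an explicit average of the terms $\gamma k$. Once this is in place, the rest of the argument is immediate.
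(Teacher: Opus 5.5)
Your proposal is correct and follows the same route as the paper: write $\text{MSE}(\hat{\beta}_\ell)=\text{Bias}^2+\text{Var}$, note that it is finite for every $\ell$ by Assumptions \ref{assumption:oracle1} and \ref{assumption:oracle2}, and take $\ell^*$ to be a minimizer over the finite set $\mathcal{L}$, setting $\hat{\beta}^*=\hat{\beta}_{\ell^*}$. The only difference is that the paper simply posits the explicit bias $\gamma\ell/(\ell+\ell_{\text{post}})$ rather than deriving it, while you leave the weights unspecified, which is enough for existence (if you do carry out the FWL derivation, index the sub-panel as $\{-1-\ell,\ldots\}$ so that $\ell=0$ keeps the baseline period $t^*=-1$, as you intend).
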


\begin{proof}
  For each $\ell \in \mathcal{L}$, the MSE of $\hat{\beta}_{\ell}$ is
  \begin{align*}
    \text{MSE}\big(\hat{\beta}_{\ell}\big) &= \text{Bias}\big(\hat{\beta}_{\ell}\big)^2 + \text{Var}\big(\hat{\beta}_{\ell}\big) \\
    &= \Big(\frac{\gamma \ell}{\ell + \ell_{\text{post}}}\Big)^2 + \text{Var}\big(\hat{\beta}_{\ell}\big) \ ( < \infty )
  \end{align*}
  Then, the optimal length of pre-trends $\ell^*$ is 
  \begin{align*}
    \ell^* = \argmin_{\ell \in \mathcal{L}}  \text{MSE}\big(\hat{\beta}_{\ell}\big) = \argmin_{\ell \in \mathcal{L}} \Bigg( \Big(\frac{\gamma \ell}{\ell + \ell_{\text{post}}}\Big)^2 + \text{Var}\big(\hat{\beta}_{\ell}\big) \Bigg)
  \end{align*}
  and the MSE-optimal DiD estimator $\hat{\beta}^*$ is 
  \begin{align*}
    \hat{\beta}^* = \hat{\beta}_{\ell^*}
  \end{align*}
\end{proof}

\noindent
This oracle rule is not feasible in practice because the true bias and variance are generally unknown. 
However, it serves as a theoretical benchmark for the feasible selection rule developed below.

In the feasible implementation, the variance is approximated using the squared standard error of the estimate.
On the other hand, the true bias is unobserved in practice.
I therefore use the estimator with pre-treatment length $0$ as a benchmark.
Specifically, I define the approximated bias for the estimator with pre-treatment length $\ell$ as
\begin{equation*}
  \text{bias}_{\text{pre-trends length} = \ell} \equiv \text{estimate}_{\text{pre-trends length} = \ell} - \text{estimate}_{\text{pre-trends length} = 0}
\end{equation*}

\begin{figure}[H]
  \centering
  \includegraphics[width=0.70\columnwidth]{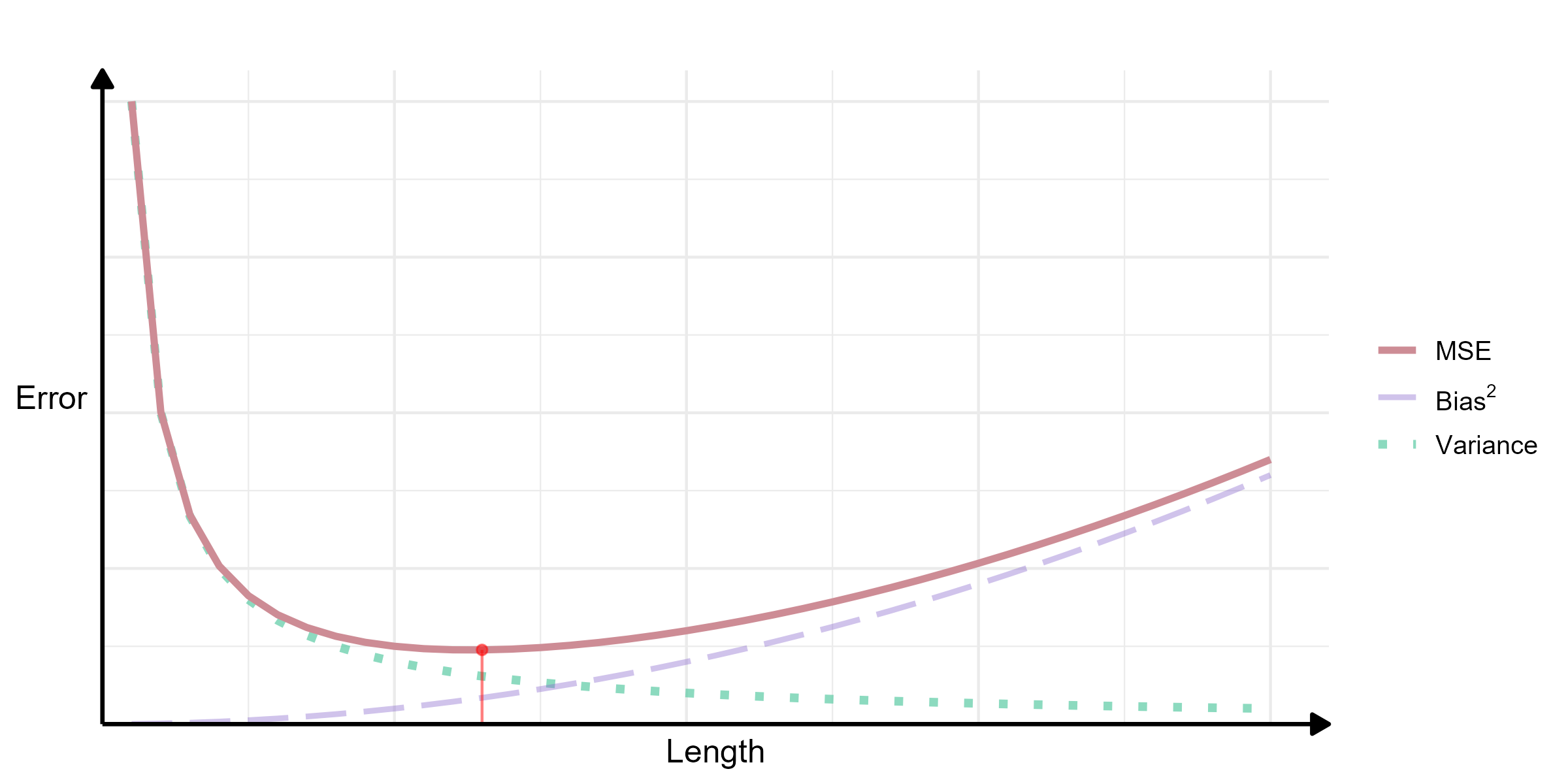}
  \caption{Bias-variance tradeoff}
  \label{fig: Bias-variance tradeoff}
  \begin{minipage}{14cm}
    \footnotesize
    Notes: This figure explains the image of the bias and variance tradeoff generated by changing the length of pre-trends.
    The variance is larger when the pre-treatment period is shorter.
    On the other hand, the bias is smaller when the pre-treatment period is shorter.
    An opposite relationship holds when the pre-treatment period is longer.
    Then the optimal length of pre-trends would be obtained when the MSE ($= \text{Bias}^2 + \text{Variance}$) is minimized.
  \end{minipage}
\end{figure}

\subsection{Static Treatment Effect Estimation}
I consider the MSE-minimizing approach in the case of a static treatment effect.
In this case, the TWFE model is often used as I mentioned in the previous section.
Therefore, I incorporate the MSE-minimizing approach into the TWFE model.
Figure\ref{fig: Changing pre-trends static version} shows the changes of the outcome with pre-trends length $9, 4,$ and $ 0$.
When the length of pre-trends is $9$, more observations are available to estimate $\beta$ in (\ref{TWFE}).
Hence, the variance of the estimate would be smaller.
However, pre-trends are not completely parallel and this would bias the estimate.
When the length of pre-trends is $0$, there are fewer observations to estimate $\beta$ in (\ref{TWFE}).
Thus, the variance of the estimate would be larger.
On the other hand, there are no pre-trends that would induce bias in the estimate.
When the length of pre-trends is $4$, there is a moderate number of observations to estimate $\beta$ in (\ref{TWFE}).
Namely, the variance of the estimate would be moderate.
Also, there is an intermediate length of pre-trends that would cause the estimate to be biased.
By calculating the variance and the bias of each length of pre-trends, the optimal length of pre-trends and the MSE-optimal TWFE estimator can be obtained.

\begin{figure}[H]
  \centering
  \includegraphics[width=0.90\columnwidth]{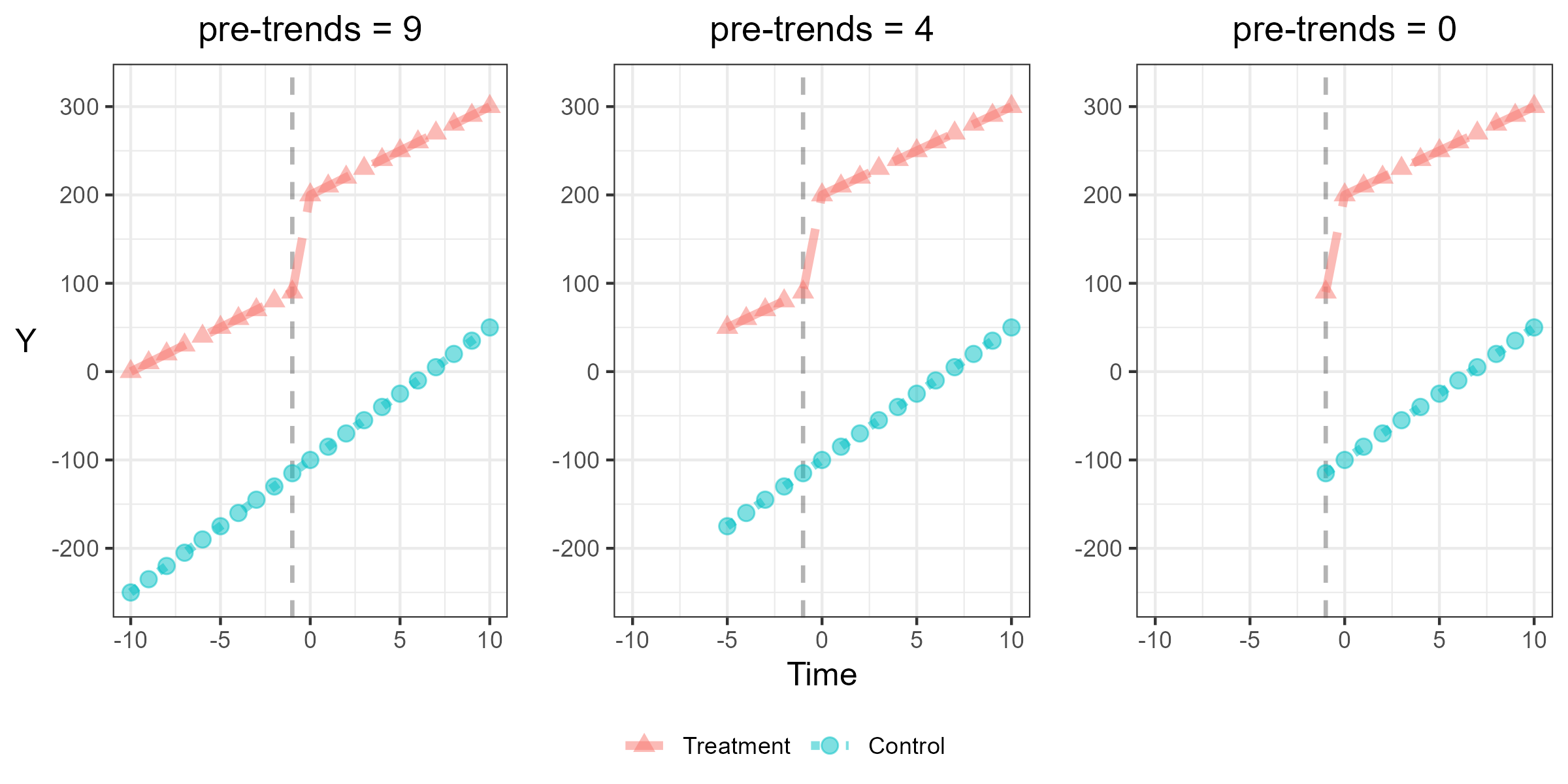}
  \caption{Different pre-trends length with static treatment effect}
  \label{fig: Changing pre-trends static version}
  \begin{minipage}{14cm}
    \footnotesize
    Notes: In the left figure, the length of the pre-trends is $9$.
    The estimate of the TWFE model with this length of pre-trends might have smaller variance and larger bias.
    In the middle figure, the length of the pre-trends is $4$.
    The estimate of the TWFE model with this length of pre-trends might have moderate variance and bias.
    In the right figure, the length of the pre-trends is $0$.
    The estimate from the TWFE model with this length of pre-trends might be larger variance and smaller bias.
  \end{minipage}
\end{figure}

\subsection{Dynamic Treatment Effect Estimation}
I next consider the MSE-minimizing approach in the case of dynamic treatment effect.
Let $\beta_{\ell}$ denote a coefficient for the treatment effect in the $\ell$th period after the treatment, and suppose that this coefficient is the parameter of interest.
The goal is to minimize the MSE of the estimator for $\beta_{\ell}$.
Although the basic idea is the same as in the static treatment effect case, an additional step is required in the dynamic setting before changing the length of pre-trends.

The estimator of the conventional event study model (\ref{simple event study}) for the treatment effect in the $\ell$th period after the treatment can be written as follows:

\begin{align}\label{event study estimates}
  \hat{\beta}_{\ell} = \frac{1}{\sum_{i = 1}^{N}1\{D_i = 1\}}\sum_{D_i = 1}(Y_{i, t^*_i + \ell} - Y_{i, t^*_i}) - \frac{1}{\sum_{i = 1}^{N}1\{D_i = 0\}}\sum_{D_i = 0}(Y_{i, t^*_i + \ell} - Y_{i, t^*_i})
\end{align}

\noindent
This means the estimator $\hat{\beta}_{\ell}$ is not affected by the values of $Y_{i, t}$ other than $Y_{i, t^*_i + \ell}, Y_{i, t^*_i}$.
Hence, changing the length of pre-trends does not affect the bias and variance of the estimator $\hat{\beta}_{\ell}$.

Therefore, I propose the following modified model (\ref{modified event study}) when the treatment effect is dynamic.

\begin{equation}\label{modified event study}
  Y_{it} =  \mu_i + \theta_t + \sum_{\ell \geq 0} \beta_{\ell} 1\{t - t^*_i = \ell\} + \varepsilon_{it}
\end{equation}

\noindent
Then, the estimator of this modified model (\ref{modified event study}) for the treatment effect in the $\ell$th period after the treatment can be written as follows:

\begin{align}\label{modified event study estimates}
  \hat{\beta}_{\ell}^m = \frac{1}{\sum_{i = 1}^{N}1\{D_i = 1\}}\sum_{D_i = 1}(Y_{i, t^*_i + \ell} - Y_{i, (t \leq t^*_i)}) - \frac{1}{\sum_{i = 1}^{N}1\{D_i = 0\}}\sum_{D_i = 0}(Y_{i, t^*_i + \ell} - Y_{i, (t \leq t^*_i)})
\end{align}

\noindent

Compared with the conventional event-study model, this modified specification does not include the pre-treatment coefficients $\beta_{\ell}$ for $\ell \leq -2$.
As a result, more observations can be used to estimate the post-treatment coefficients $\beta_{\ell}$, which may reduce the variance relative to the conventional event-study model (\ref{simple event study}). 
However, because the modified specification does not estimate coefficients for pre-treatment periods, the conventional pre-trends test cannot be conducted within this model itself. 
For this reason, the use of the modified specification is justified when the data have already passed the conventional pre-trends test under the conventional event-study model (\ref{simple event study}).

A similar approach can then be applied as in the TWFE case. 
When the pre-treatment period is longer, the estimator tends to have a smaller variance because more observations are used in the estimation. 
However, the estimator may also be more biased, since the model imposes the parallel trends assumption between the treatment and control groups over a longer pre-treatment window. 
By contrast, when the pre-treatment period is shorter, the estimator tends to have a larger variance because fewer observations are available. 
At the same time, the potential bias may be smaller, because the model relies on the parallel trends assumption over a narrower pre-treatment window. 
By minimizing the MSE of the estimator, the optimal length can be selected, yielding the MSE-optimal event-study estimator.

\section{Simulation}
In this section, I conduct two simulations to illustrate the proposed model.
I first illustrate the case of the static treatment effect in Section $4.1$. 
I then explain the dynamic treatment effect case in Section $4.2$.

\subsection{Static Treatment Effect Estimation}

This simulation applies the method in Section 3.2 to simulated data with fundamentally non-parallel trends between the treatment and the control groups. 
The simulated data are generated according to the following setup:
\begin{align*}
  &Y_{it} =  \mu_i + \theta_{it} + \beta D_{it} + \varepsilon_{it} \\
  &\text{where} \\ 
  &\begin{cases}
    \mu_{\text{treatment}} &= 100 \\
    \mu_{\text{control}} &= -100 
  \end{cases} \ , \ \theta_{it} =
  \begin{cases}
    11t \ \text{ if } t < 0 \land i \in \text{control} \\
    10t \ \text{ otherwise }
  \end{cases} , \
  \beta = 100, \ \varepsilon_{it} \sim N(0, 50)
\end{align*}
Both the treatment group and the control group are set to have $50$ individuals and $t \in [-10,10]$. 
The trends of each group are shown in Figure \ref{fig: simulation TWFE raw mean}.

The results are reported in Table \ref{MSEs TWFE}.
The bias tends to increase when the pre-trends extend over a longer period.
At the same time, the variance tends to get smaller.
The MSE is minimized when the pre-trends length is $3$, which is the optimal pre-trends length. 
The MSE-optimal TWFE estimate is $95.449 \ (6.076)$.
The conventional TWFE model does not change the pre-trends length; therefore, the conventional TWFE model estimate is $91.570 \ (4.577)$.
Compared with this estimate, the MSE-optimal estimate is less biased but it has a larger variance.

\begin{table}[H]
  \footnotesize
  \centering
  \caption{List of the estimates by each length of pre-trends}
  \label{MSEs TWFE}
  \resizebox{\linewidth}{!}{%
      
\begin{tabular}[t]{cccccc}
\toprule
Length of Pre-trends & MSE & Bias Squared & Variance & Coefficient & SE\\
\midrule
\cellcolor{white}{0} & \cellcolor{white}{124.367} & \cellcolor{white}{ 0.000} & \cellcolor{white}{124.367} & \cellcolor{white}{100.292} & \cellcolor{white}{11.152}\\
\cellcolor{white}{1} & \cellcolor{white}{ 77.355} & \cellcolor{white}{ 6.675} & \cellcolor{white}{ 70.681} & \cellcolor{white}{102.876} & \cellcolor{white}{ 8.407}\\
\cellcolor{white}{2} & \cellcolor{white}{ 68.960} & \cellcolor{white}{15.870} & \cellcolor{white}{ 53.091} & \cellcolor{white}{ 96.309} & \cellcolor{white}{ 7.286}\\
\cellcolor{light-gray}{3} & \cellcolor{light-gray}{ 60.372} & \cellcolor{light-gray}{23.456} & \cellcolor{light-gray}{ 36.916} & \cellcolor{light-gray}{ 95.449} & \cellcolor{light-gray}{ 6.076}\\
\cellcolor{white}{4} & \cellcolor{white}{ 81.921} & \cellcolor{white}{48.350} & \cellcolor{white}{ 33.571} & \cellcolor{white}{ 93.339} & \cellcolor{white}{ 5.794}\\
\addlinespace
\cellcolor{white}{5} & \cellcolor{white}{ 76.763} & \cellcolor{white}{44.352} & \cellcolor{white}{ 32.411} & \cellcolor{white}{ 93.633} & \cellcolor{white}{ 5.693}\\
\cellcolor{white}{6} & \cellcolor{white}{ 84.406} & \cellcolor{white}{56.812} & \cellcolor{white}{ 27.594} & \cellcolor{white}{ 92.755} & \cellcolor{white}{ 5.253}\\
\cellcolor{white}{7} & \cellcolor{white}{108.727} & \cellcolor{white}{84.194} & \cellcolor{white}{ 24.534} & \cellcolor{white}{ 91.117} & \cellcolor{white}{ 4.953}\\
\cellcolor{white}{8} & \cellcolor{white}{115.393} & \cellcolor{white}{92.359} & \cellcolor{white}{ 23.033} & \cellcolor{white}{ 90.682} & \cellcolor{white}{ 4.799}\\
\cellcolor{white}{9} & \cellcolor{white}{ 97.023} & \cellcolor{white}{76.076} & \cellcolor{white}{ 20.947} & \cellcolor{white}{ 91.570} & \cellcolor{white}{ 4.577}\\
\bottomrule
\end{tabular}
 
  }
  \begin{minipage}{14cm}
    \vspace{0.2cm}
    \footnotesize
    Notes: Each row represents the (estimated) values of MSE, bias squared, variance, coefficient, and SE with each length of pre-trends.
    The MSE is minimized when the pre-trends length is $3$ (the shaded row). 
  \end{minipage}
\end{table}

\subsection{Dynamic Treatment Effect Estimation}
The next simulation considers a dynamic treatment effect.
This simulation applies the method in Section 3.3 to different simulated data which also have fundamentally non-parallel trends between the treatment and the control groups. 
The simulated data are generated according to the following setup:
\begin{align*}
  &Y_{it} =  \mu_i + \theta_{it} + \sum_{l \neq -1} \beta_l 1\{t - t^*_i = l\} + \varepsilon_{it} \\
  &\text{where} \\ 
  &\begin{cases}
    \mu_{\text{treatment}} &= 100 \\
    \mu_{\text{control}} &= -100 
  \end{cases} \ , \ \theta_{it} = 
  \begin{cases}
    7t \ \text{ if } t < 0 \land i \in \text{control} \\
    10t \ \text{ otherwise }
  \end{cases} , \
  \beta_l = 6l, \ \varepsilon_{it} \sim N(0, 50)
\end{align*}
In the same way as the simulation in Section $4.1$, both the treatment group and the control group are set to have $50$ individuals and $t \in [-10,10]$. 
The trends of each group are shown in Figure \ref{fig: simulation event study raw mean}.

Suppose that the target parameter is $\beta_5$.
Note that the model to estimate is model (\ref{modified event study}).
The results are reported in Table \ref{MSEs event study beta 5}.
The bias of the estimates changes nonmonotonically but it seems to get larger when the pre-trends extend over a longer period.
The variance tends to get smaller as the pre-trends become longer.
The MSE is minimized when the pre-trends length is $6$, which is the optimal pre-trends length. 
The MSE-optimal event study estimate is $50.943 \ (22.021)$.

\begin{table}[H]
  \footnotesize
  \centering
  \caption{List of the estimates by each length of pre-trends \\ Target parameter = $\beta_{5}$}
  \label{MSEs event study beta 5}
  \resizebox{\linewidth}{!}{%
      
\begin{tabular}[t]{cccccc}
\toprule
Length of Pre-trends & MSE & Bias Squared & Variance & Coefficient & SE\\
\midrule
\cellcolor{white}{0} & \cellcolor{white}{1017.981} & \cellcolor{white}{  0.000} & \cellcolor{white}{1017.981} & \cellcolor{white}{47.099} & \cellcolor{white}{31.906}\\
\cellcolor{white}{1} & \cellcolor{white}{ 676.540} & \cellcolor{white}{ 51.938} & \cellcolor{white}{ 624.602} & \cellcolor{white}{54.306} & \cellcolor{white}{24.992}\\
\cellcolor{white}{2} & \cellcolor{white}{ 645.241} & \cellcolor{white}{ 89.283} & \cellcolor{white}{ 555.958} & \cellcolor{white}{56.548} & \cellcolor{white}{23.579}\\
\cellcolor{white}{3} & \cellcolor{white}{ 508.401} & \cellcolor{white}{ 18.735} & \cellcolor{white}{ 489.665} & \cellcolor{white}{51.428} & \cellcolor{white}{22.128}\\
\cellcolor{white}{4} & \cellcolor{white}{ 525.972} & \cellcolor{white}{  7.822} & \cellcolor{white}{ 518.150} & \cellcolor{white}{49.896} & \cellcolor{white}{22.763}\\
\addlinespace
\cellcolor{white}{5} & \cellcolor{white}{ 504.936} & \cellcolor{white}{ 25.067} & \cellcolor{white}{ 479.870} & \cellcolor{white}{52.106} & \cellcolor{white}{21.906}\\
\cellcolor{light-gray}{6} & \cellcolor{light-gray}{ 499.688} & \cellcolor{light-gray}{ 14.774} & \cellcolor{light-gray}{ 484.914} & \cellcolor{light-gray}{50.943} & \cellcolor{light-gray}{22.021}\\
\cellcolor{white}{7} & \cellcolor{white}{ 526.362} & \cellcolor{white}{ 60.236} & \cellcolor{white}{ 466.126} & \cellcolor{white}{54.860} & \cellcolor{white}{21.590}\\
\cellcolor{white}{8} & \cellcolor{white}{ 600.299} & \cellcolor{white}{133.914} & \cellcolor{white}{ 466.385} & \cellcolor{white}{58.671} & \cellcolor{white}{21.596}\\
\cellcolor{white}{9} & \cellcolor{white}{ 727.893} & \cellcolor{white}{269.806} & \cellcolor{white}{ 458.088} & \cellcolor{white}{63.525} & \cellcolor{white}{21.403}\\
\bottomrule
\end{tabular}
 
  }
  \begin{minipage}{14cm}
    \vspace{0.2cm}
    \footnotesize
    Notes: Each row represents the (estimated) values of MSE, bias squared, variance, coefficient, and SE with each length of pre-trends.
    The MSE is minimized when the pre-trends length is $6$ (the shaded row).
  \end{minipage}
\end{table}

\newpage
\begin{landscape}
  \begin{figure}
    \centering
    \includegraphics[width=0.90\columnwidth]{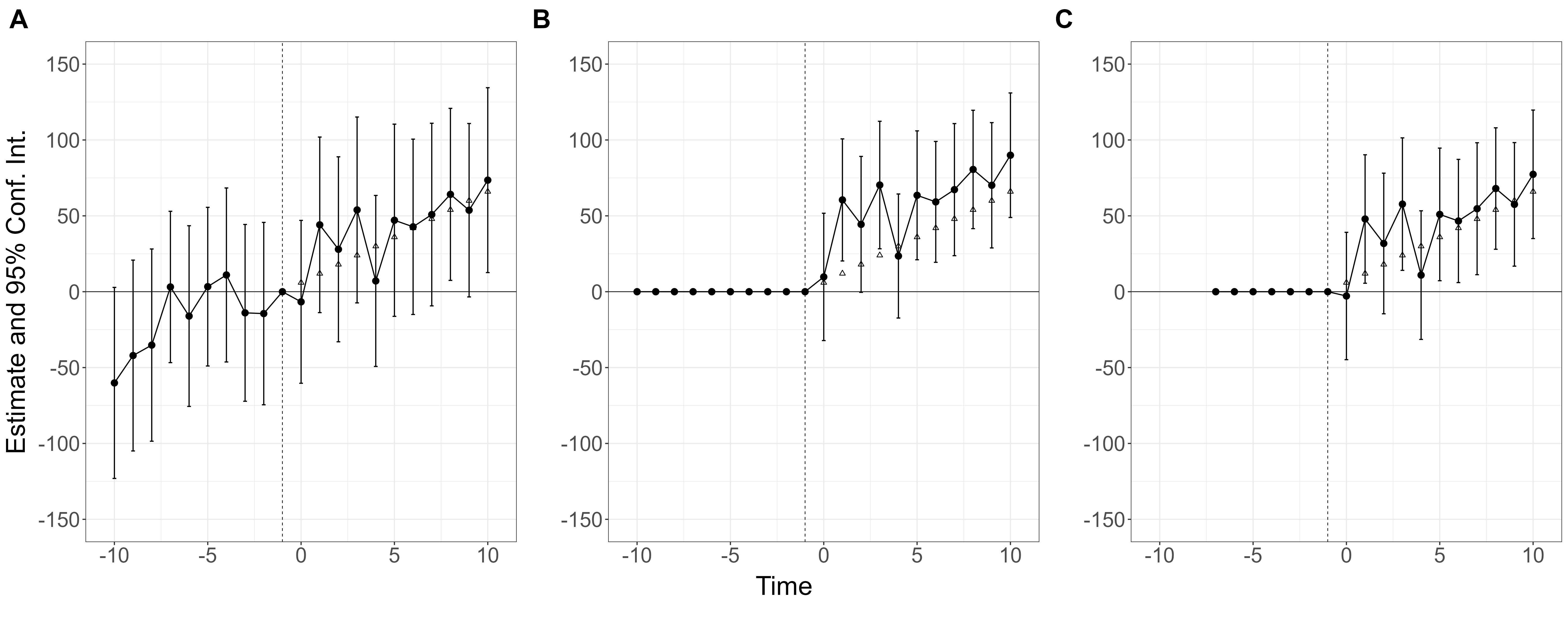}
    \caption{Event study plots with different models}
    \label{fig: simulation event study optimal}
    \begin{minipage}{14cm}
      \footnotesize
      Notes: 
      These figures represent the event study plots with three models: (A) the traditional event study model (\ref{simple event study}),
      (B) the modified event study model (\ref{modified event study}) with the full length of pre-trends,
      (C) the MSE-optimal event study model. 
      The black points are the point estimates and the vertical lines are the $95\%$ confidence intervals.
      The triangle points are the true values of the simulation.
    \end{minipage}
  \end{figure}
\end{landscape}
\newpage

I compare the results with the conventional event study model. 
Figure \ref{fig: simulation event study optimal} (A) represents the event study plot of the conventional event study model (\ref{simple event study}).
It barely passes the pre-trends test, but the estimates have larger variances.
Next, I apply the modified model (\ref{modified event study}).
Figure \ref{fig: simulation event study optimal} (B) shows the event study plot of the modified event study model (\ref{modified event study}) with the full length of pre-trends.
Owing to the increase of the sample size for the estimation, the estimates have smaller variances.
However, imposing the parallel trends assumption on pre-treatment periods for the two groups introduces bias into the estimates.
Lastly, Figure \ref{fig: simulation event study optimal} (C) shows the event study plot of the MSE-optimal event study model. 
The estimates have smaller variances compared with the traditional event study model (\ref{simple event study}) and
are less biased compared with the modified event study model (\ref{modified event study}) with the full length of pre-trends.

\section{Empirical Application}
I apply the MSE-optimal DiD model to the setting of \citeA{Fitzpatrick_and_Lovenheim:2014}.
This research studies the effect of the Early Retirement Incentives (ERI) program in Illinois in 1993 on student achievement.
They estimate regressions of the following form:
\begin{equation}\label{Empirical_application_eq}
  \begin{split}
    Y_{igt}^s =& \beta_0 + \beta_1 (Teachers \geq 15)_{ig} \times Post_t + \beta_2 \textit{Teachers}_{ig} \times Post_t \\ 
    &+ \gamma \mathbb{X}_{it} + \delta_{ig} + \phi_{tg} + \varepsilon_{igt}^s
  \end{split}
\end{equation}
where $Y_{igt}^s$ is the standardized test score in grade $g$ for subject $s$ in school $i$ and year $t$.
The variable $Teachers \geq 15$ is the average number of teachers in a given grade with at least $15$ years of experience pre-$1994$. 
$Teachers$ is the average total number of teachers in a grade and school in the pre-ERI period, and
$Post$ is an indicator variable equal to $1$ for school years after $1993$. 
The vector $\mathbb{X}$ contains the set of school-by-year demographic variables.
$\delta_{ig}$ is school-by-grade fixed effects, and $\phi_{tg}$ is grade-by-year fixed effects.

In one analysis, they estimate the effect of the ERI program on the Math test score using a conventional event study model based on the regression (\ref{Empirical_application_eq}).
The results are shown in the left panel of Figure \ref{fig: F3PA_MSE}.
It passes the pre-trends test but the confidence intervals are relatively wide.
In other words, the variances of the estimates are large.

Suppose that the target parameter is the $1997$ coefficient because the effect of the ERI program on student achievement may arise with a delay.
Then, the results of the MSE-optimal event study model are reported in Table \ref{table: F3PA_MSE}.
Based on the calculation, MSE is minimized when the length of the pre-trends is $3$.
Therefore, the MSE-optimal event study plot is shown in the right panel of the Figure \ref{fig: F3PA_MSE}.

\begin{figure}[H]
  \centering
  \includegraphics[width=0.95\columnwidth]{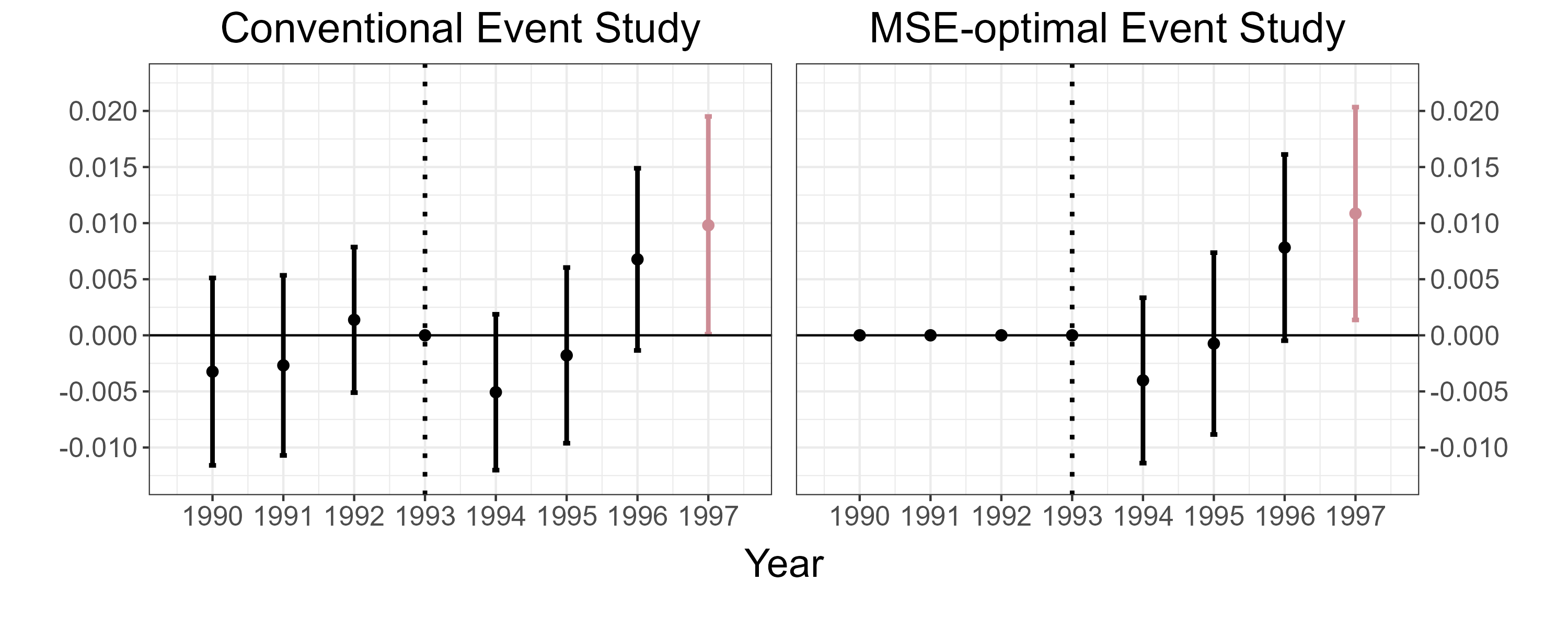}
  \caption{Event Study Estimates of the Effect of the Early Retirement Incentives (ERI) Program on Student Math Test Scores}
  \label{fig: F3PA_MSE}
  \begin{minipage}{14cm}
    \footnotesize
    Notes: This figure explains the effect of ERI program on math test scores.
    It shows both point estimates and 95$\%$ confidence intervals.
    The left plot, based on the conventional event study model, is the equivalent to the plot in \citeA{Fitzpatrick_and_Lovenheim:2014}.
    The right plot is based on MSE-optimal event study model.
    The target parameter is the 1997 coefficient and the MSE of this parameter is minimized. 
  \end{minipage}
\end{figure}

\begin{table}[H]
  \footnotesize
  \centering
  \caption{List of the estimates by each length of pre-trends}
  \label{table: F3PA_MSE}
  \resizebox{\linewidth}{!}{%
      
\begin{tabular}[t]{cccccc}
\toprule
Length of Pre-trends & MSE & Bias Squared & Variance & Coefficient & SE\\
\midrule
\cellcolor{white}{0} & \cellcolor{white}{0.0000250} & \cellcolor{white}{0.0000000} & \cellcolor{white}{0.0000250} & \cellcolor{white}{0.00982} & \cellcolor{white}{0.00500}\\
\cellcolor{white}{1} & \cellcolor{white}{0.0000266} & \cellcolor{white}{0.0000004} & \cellcolor{white}{0.0000262} & \cellcolor{white}{0.00917} & \cellcolor{white}{0.00512}\\
\cellcolor{white}{2} & \cellcolor{white}{0.0000247} & \cellcolor{white}{0.0000004} & \cellcolor{white}{0.0000243} & \cellcolor{white}{0.01042} & \cellcolor{white}{0.00493}\\
\cellcolor{light-gray}{3} & \cellcolor{light-gray}{0.0000245} & \cellcolor{light-gray}{0.0000011} & \cellcolor{light-gray}{0.0000234} & \cellcolor{light-gray}{0.01086} & \cellcolor{light-gray}{0.00484}\\
\bottomrule
\end{tabular}
 
  }
  \begin{minipage}{14cm}
    \vspace{0.2cm}
    \footnotesize
    Notes: This table shows the result of the MSE-optimal event study model.
    The target parameter is the 1997 coefficient.
    Each row represents the (estimated) values of MSE, bias squared, variance, coefficient, and SE with each length of pre-trends.
    The MSE is minimized when the pre-trends length is $3$ (the shaded row).
  \end{minipage}  
\end{table}

\section{Conclusion}
This paper proposes an MSE-optimal approach to difference-in-differences estimation.
Conventional DiD designs (TWFE model, event study model) may suffer from either high variance or substantial bias.
The proposed model changes the length of pre-trends and finds the optimal one that minimizes the MSE of the estimator.
I illustrate the method using simulations and an empirical application.
As future work, it might be possible to apply this method to a staggered treatment setting.

\nocite{*}
\bibliographystyle{apacite}
\bibliography{MSE_optimal_DiD_reference}

@article{Baker_etal:2026,
  title={Difference-in-Differences Designs: A Practitioner's Guide},
  author={Baker, A. C. and Callaway, B. and Cunningham, S. and Goodman-Bacon, A. and Sant'Anna, P. H.},
  journal={Journal of Economic Literature},
  volume={},
  number={},
  pages={},
  year={forthcoming}
}

@article{Baker_etal:2022,
  title={How much should we trust staggered difference-in-differences estimates?},
  author={Baker, A. C. and Larcker, D. F. and Wang, C. C},
  journal={Journal of Financial Economics},
  volume={144},
  number={2},
  pages={370-395},
  year={2022}
}

@misc{Butts:2023,
      title={Difference-in-Differences Estimation with Spatial Spillovers}, 
      author={Butts, K.},
      year={2023},
      eprint={2105.03737},
      archivePrefix={arXiv},
      primaryClass={econ.EM},
      url={https://arxiv.org/abs/2105.03737}, 
}

@article{Callaway_and_Sant'Anna:2022,
  title={Difference-in-differences with multiple time periods},
  author={Callaway, B. and Sant'Anna, P. H.},
  journal={Journal of Econometrics},
  volume={225},
  number={2},
  pages={200-230},
  year={2021}
}

@article{Calonico_and_Cattaneo_and_Titiunik:2014,
  title = {Robust Nonparametric Confidence Intervals for Regression-Discontinuity Designs},
  author = {Calonico, S. and Cattaneo, M. D. and Titiunik, R.},
  journal = {Econometrica},
  volume = {82},
  number = {6},
  pages = {2295-2326},
  year = {2014}
}

@article{Card_and_Krueger:2022,
  title={Minimum wages and employment: A case study of the fast-food industry in New Jersey and Pennsylvania},
  author={Card, D. and Krueger, A. B.},
  journal={American Economic Review},
  volume={84},
  number={4},
  pages={772-793},
  year={1994}
}

@article{de_Chaisemartin_and_d'Haultfoeuille:2020,
  title={Two-way fixed effects estimators with heterogeneous treatment effects},
  author={de Chaisemartin, C. and d'Haultfoeuille, X.},
  journal={American Economic Review},
  volume={110},
  number={9},
  pages={2964-2996},
  year={2020}
}

@article{de_Chaisemartin_and_d'Haultfoeuille:2023,
  title={Two-way fixed effects and differences-in-differences with heterogeneous treatment effects: A survey},
  author={de Chaisemartin, C. and d'Haultfoeuille, X.},
  journal={The Econometrics Journal},
  volume={26},
  number={3},
  pages={C1-C30},
  year={2023}
}

@article{Egami_and_Yamauchi:2023,
  title={Using multiple pretreatment periods to improve difference-in-differences and staggered adoption designs},
  author={Egami, N. and Yamauchi, S.},
  journal={Political Analysis},
  volume={31},
  number={2},
  pages={195-212},
  year={2023}
}

@misc{Fiorini_Lee_and_Pfeifer:2024,
  title={A simple approach to staggered difference-in-differences in the presence of spillovers},
  author={Fiorini, M. and Lee, W. and Pfeifer, G.},
  publisher={CESifo Working Paper No.11011},
  url={https://ssrn.com/abstract=4772584},
  year={2024}
}

@article{Fitzpatrick_and_Lovenheim:2014,
  title={Early retirement incentives and student achievement},
  author={Fitzpatrick, M. D. and Lovenheim, M. F.},
  journal={American Economic Journal: Economic Policy},
  volume={6},
  number={3},
  pages={120-154},
  year={2014}
}

@misc{Fitzpatrick_and_Lovenheim:2019,
  title={Replication data for: Early Retirement Incentives and Student Achievement},
  author={Fitzpatrick, M. D. and Lovenheim, M. F.},
  publisher={American Economic Association [publisher], Inter-university Consortium for Political and Social Research [distributor]},
  url={https://doi.org/10.3886/E114865V1},
  year={2019}
}

@article{Goodman-Bacon:2021,
  title={Difference-in-differences with variation in treatment timing},
  author={Goodman-Bacon, A.},
  journal={Journal of Econometrics},
  volume={225},
  number={2},
  pages={254-277},
  year={2021}
}

@book{Hansen:2022,
  title={Econometrics},
  author={Hansen, B.},
  publisher={Princeton University Press},
  year={2022}
}

@misc{Ishimaru:2026,
      title={Estimating Treatment Effects in Panel Data Without Parallel Trends}, 
      author={Ishimaru, S.},
      year={2026},
      archivePrefix={arXiv},
      url={https://arxiv.org/abs/2601.08281}, 
}

@article{Imbens_and_Kalyanaraman:2012,
  title={Optimal bandwidth choice for the regression discontinuity estimator},
  author={Imbens, G. and Kalyanaraman, K.},
  journal={The Review of Economic Studies},
  volume={79},
  number={3},
  pages={933-959},
  year={2012}
}

@article{Rambachan_and_Roth:2023,
  title={A more credible approach to parallel trends},
  author={Rambachan, A. and Roth, J.},
  journal={The Review of Economic Studies},
  volume={90},
  number={5},
  pages={2555-2591},
  year={2023}
}

@article{Roth:2022,
  title={Pretest with caution: Event-study estimates after testing for parallel trends},
  author={Roth, J.},
  journal={American Economic Review: Insights},
  volume={4},
  number={3},
  pages={305-322},
  year={2022}
}

@article{Roth_etal:2023,
  title={What's trending in difference-in-differences? A synthesis of the recent econometrics literature},
  author={Roth, J. and Sant'Anna, P. H. and Bilinski, A. and Poe, J.},
  journal={Journal of Econometrics},
  volume={235},
  number={2},
  pages={2218-2244},
  year={2023}
}

@article{Sun_and_Abraham:2021,
  title={Estimating dynamic treatment effects in event studies with heterogeneous treatment effects},
  author={Sun, L. and Abraham, S.},
  journal={Journal of Econometrics},
  volume={225},
  number={2},
  pages={175-199},
  year={2021}
}

\appendix
\renewcommand{\thetable}{ \Alph{section}.\arabic{table}}
\renewcommand{\thefigure}{ \Alph{section}.\arabic{figure}}
\setcounter{table}{0}    
\setcounter{figure}{0}

\section{Figures for Baseline DiD Models}
In Section $2.1$, I formulate the simplest $2 \times 2$ DiD setting, and Figure \ref{fig: DiD} illustrates how the ATT is identified in this framework. 
In Sections $2.2$ and $2.3$, I introduce two commonly used DiD specifications: the TWFE model and the event-study model. 
The TWFE specification is appropriate when the treatment effect is static, whereas the event-study specification is appropriate when the treatment effect is dynamic. 
Figure \ref{fig: canonical plots} illustrates the difference between these two cases.

\begin{figure}[H]
  \centering
  \includegraphics[width=0.50\columnwidth]{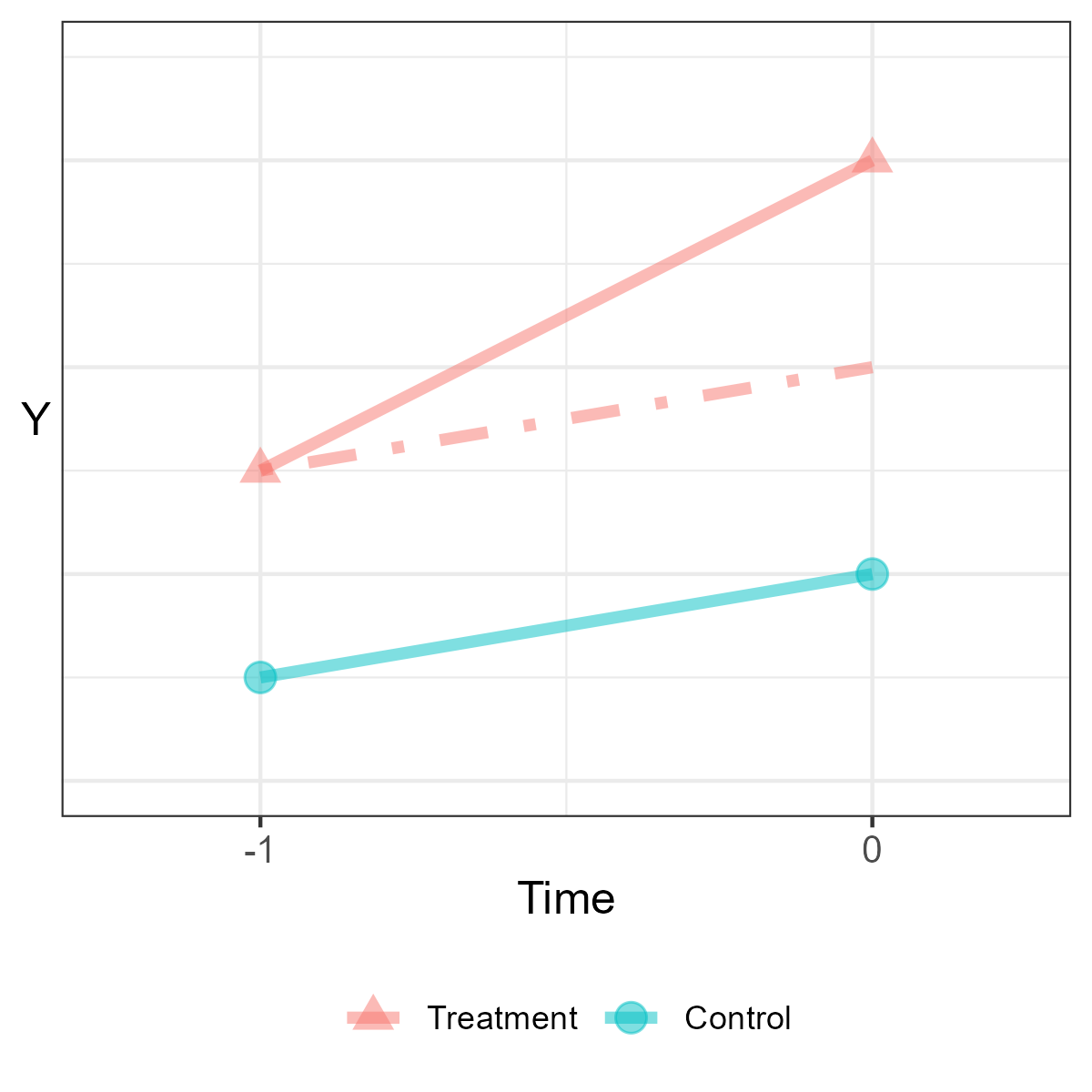}
  \caption{$2 \times 2$ DiD Design}
  \label{fig: DiD}
  \begin{minipage}{14cm}
    \footnotesize
    Notes: This figure overviews the changes of outcomes with both the treatment group (red line) and the control group (blue line). 
    The dashed red line shows the counterfactual outcome of the treatment group without a treatment.
    The difference between the red line and the red dashed line represents the ATT. 
  \end{minipage}
\end{figure}

\begin{figure}[H]
  \centering
  \includegraphics[width=0.85\columnwidth]{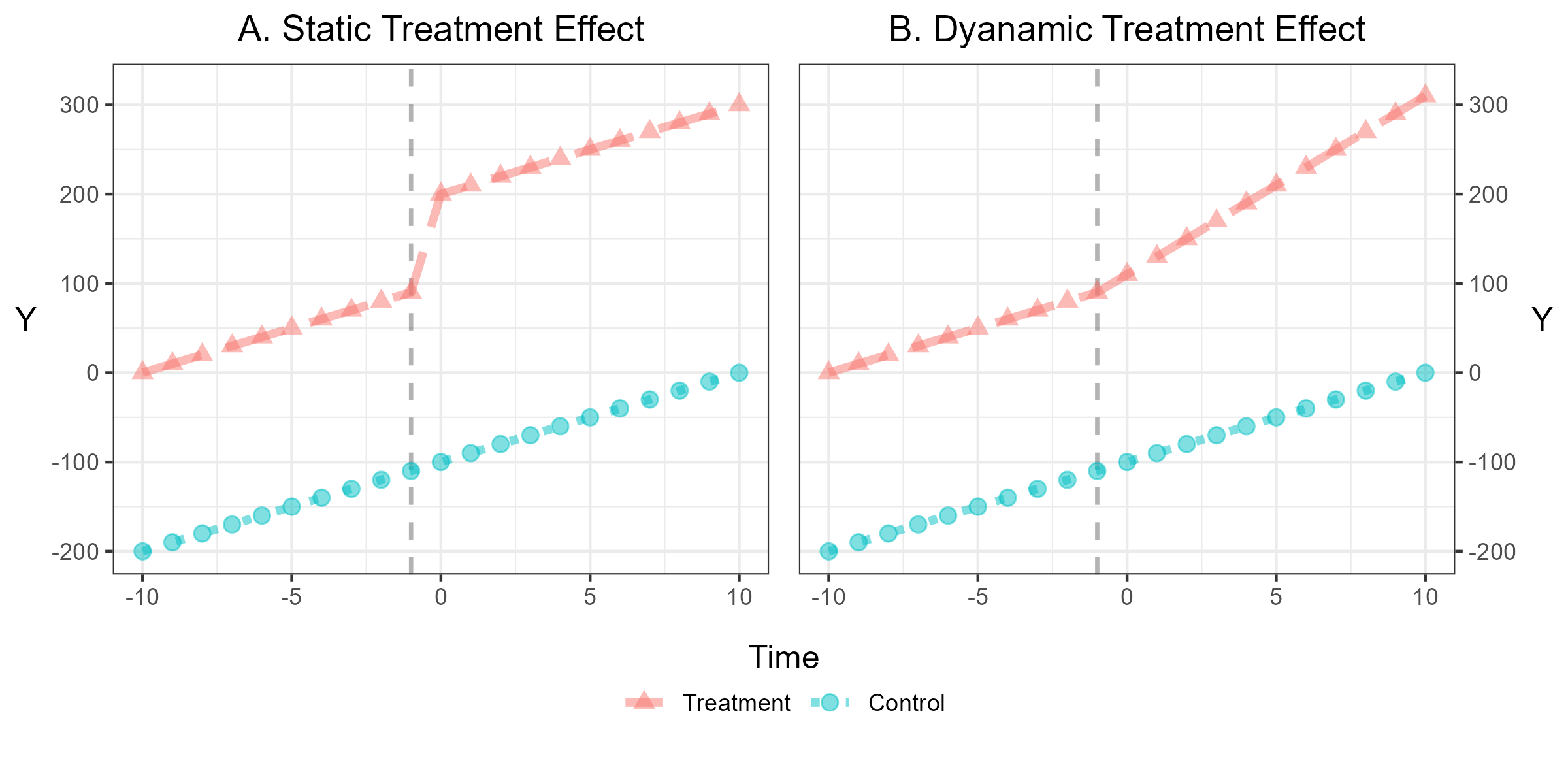}
  \caption{DiD with static and dynamic treatment effects}
  \label{fig: canonical plots}
  \begin{minipage}{14cm}
    \footnotesize
    Notes: The figures explain the changes of outcomes with both the treatment group (red line) and the control group (blue line). 
    In the left figure, the treatment effect is constant across post-treatment periods, that is, the treatment effect is static. 
    In the right figure, the treatment effect varies across post-treatment periods, that is, the treatment effect is dynamic.
  \end{minipage}
\end{figure}

\newpage

\section{Additional Figures of Simulation}
The following two figures are the overviews of the simulation data in Section $4.1$ and $4.2$.
\begin{figure}[H]
  \centering
  \includegraphics[width=0.70\columnwidth]{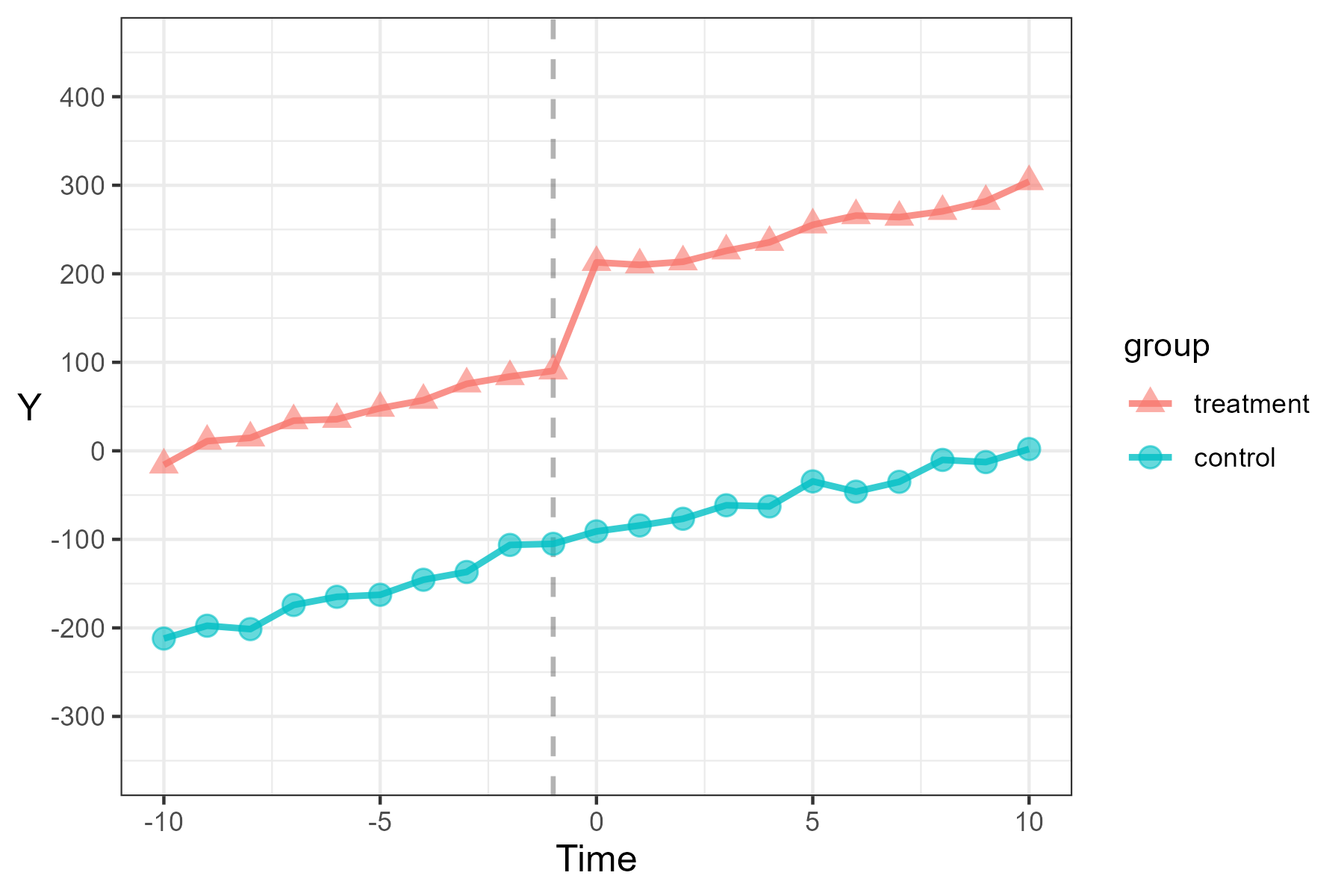}
  \caption{Trends of the 2 groups with static treatment effect}
  \label{fig: simulation TWFE raw mean}
  \begin{minipage}{14cm}
    \footnotesize
    Note: This figure represents the trends of the mean values of the two groups: the treatment group (red line) and the control group (blue line).
    The treatment effect is static, i.e., time-invariant.
  \end{minipage}
\end{figure}

\begin{figure}[H]
  \centering
  \includegraphics[width=0.70\columnwidth]{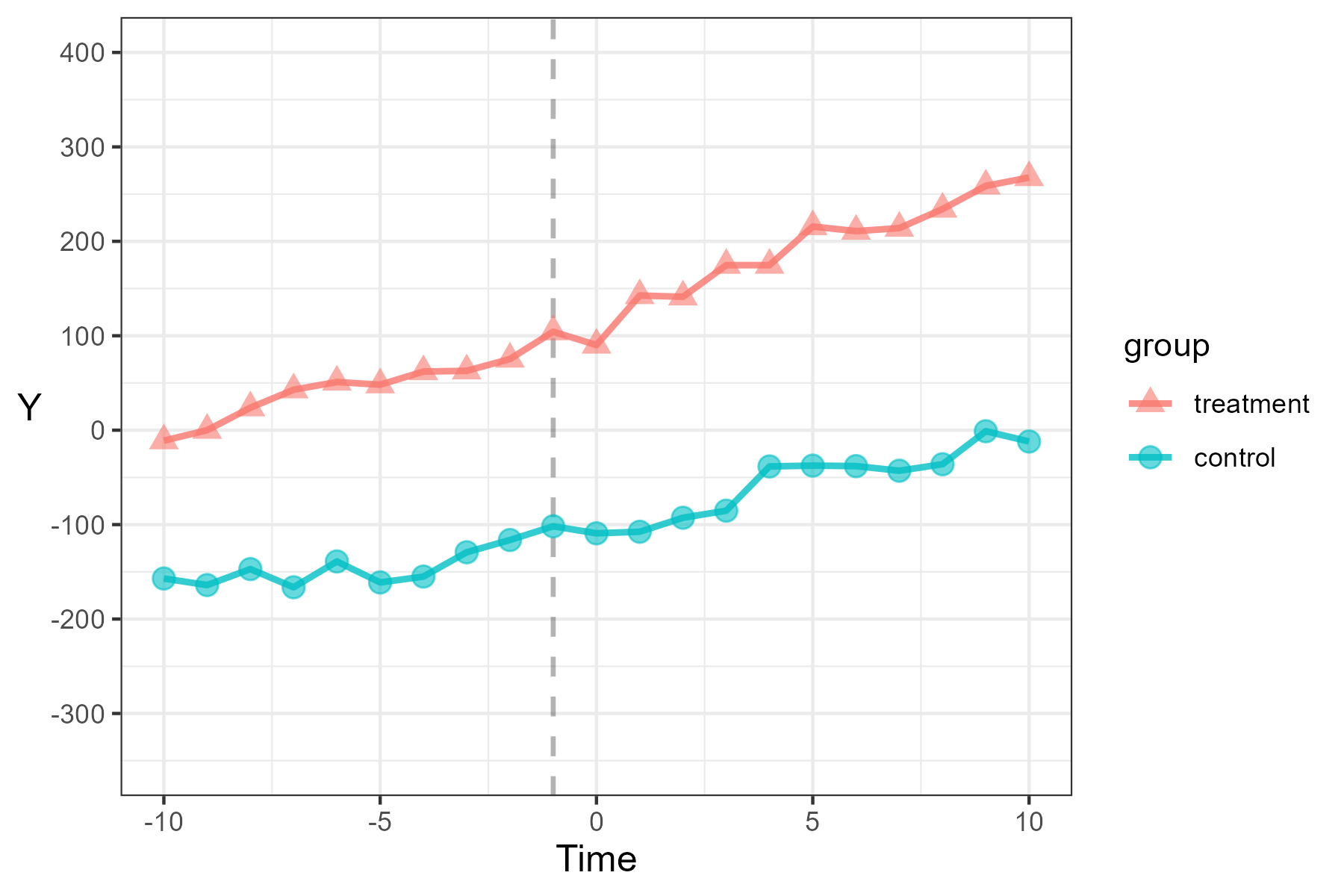}
  \caption{Trends of the 2 groups with dynamic treatment effect}
  \label{fig: simulation event study raw mean}
  \begin{minipage}{14cm}
    \footnotesize
    Note: This figure represents the trends of the mean values of the two groups: the treatment group (red line) and the control group (blue line).
    The treatment effect is dynamic, i.e., time-variant.
  \end{minipage}
\end{figure}

\end{document}